\newcounter{NN}
\newtheorem{theorem}[NN]{Theorem}
\def\x{{\bf x}}
\def\f{{\bf f}}
\begin{document}

\title{Homogeneous Darboux polynomials \\
and generalising integrable ODE systems}
\author{
Peter H.~van der Kamp, D.I. McLaren and G.R.W. Quispel\\[2mm]
Department of Mathematics and Statistics, La Trobe University, Victoria 3086, Australia.\\
Email: P.vanderKamp@LaTrobe.edu.au\\[7mm]
Keywords: Darboux polynomials, Lotka-Volterra systems, Liouville integrability, superintegrability.
}

\maketitle

\begin{abstract}
We show that any system of ODEs can be modified whilst preserving its homogeneous Darboux polynomials. We employ the result to generalise a hierarchy of integrable Lotka-Volterra systems.
\end{abstract}

\section{Introduction}
We are concerned with systems of Ordinary Differential Equations (ODEs),
\begin{equation} \label{ode}
\dot{\x}=\f(\x),
\end{equation}
where $\dot \x$ denotes the time derivative of a vector $\x$. A Darboux polynomial (or second integral) of (\ref{ode}) is a polynomial $P(\x)$ such that $\dot{P}=C(\x)P$ for some function $C$ which is called the cofactor of $P$ \cite{G}. Darboux polynomials are important as the existence of sufficiently many Darboux polynomials implies the existence of a first integral, cf. Theorems 2.2 and 2.3 in \cite{G}. Recently their use was extended to the discrete setting in \cite{CEMOQTV}.

In this paper, we propose the following generalisation of any ODE system of the form (\ref{ode}):
\begin{equation} \label{gode}
\dot{\x}=\f(\x)+b(\x,t)\x,
\end{equation}
where $b$ is a scalar function of $\x,t$. We will prove that if $P$ is a homogeneous Darboux polynomial for (\ref{ode}), then $P$ is also a Darboux polynomial for (\ref{gode}) with a modified cofactor.

We show that in several examples the above generalisation preserves the integrability of the ODE, e.g. this is the case for generalisations of: (i) the 2-dimensional system
\begin{equation} \label{cex}
\begin{split}
\dot{x}&=x^2+2xy+3y^2,\\
\dot{y}&=2y(2x+y),
\end{split}
\end{equation}
found in \cite[Appendix]{C}, (ii) the 4-dimensional Lotka-Volterra (LV) system
\begin{equation} \label{OLV4}
\begin{split}
\dot{x}_1&=x_1(+x_2+x_3+x_4)\\
\dot{x}_2&=x_2(-x_1+x_3+x_4)\\
\dot{x}_3&=x_3(-x_1-x_2+x_4)\\
\dot{x}_4&=x_4(-x_1-x_2-x_3),
\end{split}
\end{equation}
as well as (iii) higher dimensional LV systems found in \cite{KKQTV}. For the LV systems we show that both Liouville integrability and superintegrability are preserved under certain generalisations given by (\ref{gode}).

\section{Darboux polynomials and integrals/integrability}
Note that if $P_1$ and $P_2$ are Darboux polynomials with cofactors $C_1$ and $C_2$ respectively, the product $P_1^aP_2^b$ is a Darboux polynomial with cofactor $aC_1+bC_2$. This implies that linear relations between cofactors give rise to integrals.

\vspace{5mm}

For the 2-dimensional system (\ref{cex}) three Darboux polynomials
\begin{equation} \label{DP123}
P_1=x+y,\qquad P_2=x-y,\qquad P_3=y,
\end{equation}
with cofactors given by
\begin{equation} \label{C123}
C_1=x+5y,\qquad C_2=x-y,\qquad C_3=4x+2y,
\end{equation}
respectively, were given in \cite[Example 2.21]{G}. As these cofactors satisfy the linear relation $C_1+3C_2-C_3=0$, an integral is given by
\[
I=P_1P_2^3P_3^{-1}=\frac{(x+y)(x-y)^3}{y}.
\]

\vspace{5mm}

The 4-dimensional LV system (\ref{OLV4}) admits linear Darboux polynomials
of the form
\[
P_{i,j}=\sum_{k=i}^j x_k, \text{ with } 1\leq i\leq j\leq 4,
\]
with corresponding cofactor
\[
C_{i,j}=-\sum_{k=1}^{i-1} x_k + \sum_{k=j+1}^n x_k.
\]
Because
\[
C_{1,2}-C_{3,3}+C_{4,4}=(x_3+x_4)-(-x_1-x_2+x_4)+(-x_1-x_2-x_3)=0,
\]
the rational function
\[
F=P_{1,2}P_{3,3}^{-1}P_{4,4}=(x_1+x_2)\frac{x_4}{x_3}
\]
is an integral. And similarly,
\[
C_{3,4}-C_{2,2}+C_{1,1}=(-x_1-x_2)-(-x_1+x_3+x_4)+(x_2+x_3+x_4)=0
\]
yields the rational integral
\[
G=P_{3,4}P_{2,2}^{-1}P_{1,1}=(x_3+x_4)\frac{x_1}{x_2}.
\]
As $C_{1,4}=0$, the function
\[
H=P_{1,4}=x_1+x_2+x_3+x_4
\]
provides a third integral. The functions $F,G,H$ are functionally independent, as their gradients are linearly independent, and therefore the LV system (\ref{OLV4}) is superintegrable. The variables $u_i=P_{1,i}$ provide a separation of variables, i.e. each variable satisfies the same differential equation $\dot{u}_i=u_i(H-u_i)$ which can be explicitly integrated, cf. \cite{B}

The system (\ref{OLV4}) is also a Hamiltonian system, with Hamiltonian $H$ and quadratic Poisson bracket, of rank 4,
\begin{equation} \label{bra}
\{x_i,x_j\}=x_ix_j,\qquad i<j.
\end{equation}
As both $F$ and $G$ Poisson commute with $H$, the systems $F,H$ and $G,H$, and hence the vector field (\ref{OLV4}), are Liouville integrable, cf. \cite{KKQTV}.

\section{Generalising ODE systems}
The following result is quite general, it generalises any ODE system (\ref{ode}) whilst preserving all homogeneous Darboux polynomials.

\begin{theorem} \label{GODEs}
Let $P(\x)$ be a homogeneous Darboux polynomial of degree $d$ with cofactor $C(\x)$ for the system of ODEs $\dot{\x}=f(\x)$. Then $P$ is a Darboux polynomial for the system $\dot{\x}=\f(\x)+b(\x,t)\x$, with cofactor $C+db(\x,t)$, where $b$ is a scalar function of $\x,t$.
\end{theorem}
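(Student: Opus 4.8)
The plan is to compute the time derivative of $P$ along solutions of the modified system \eqref{gode} directly from the chain rule, and to reduce everything to the original Darboux property together with Euler's identity for homogeneous functions. Since $P=P(\x)$ carries no explicit $t$-dependence, the chain rule gives $\dot P=\nabla P\cdot\dot\x$ with no extra $\partial_t P$ term, even though the perturbation $b(\x,t)$ does depend on $t$; I would flag this point explicitly, as it is the only place the time dependence of $b$ could plausibly cause trouble.

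First I would substitute the modified velocity field and split the result:
\[
\dot P=\nabla P\cdot\bigl(\f(\x)+b(\x,t)\x\bigr)=\nabla P\cdot\f(\x)+b(\x,t)\,\bigl(\nabla P\cdot\x\bigr).
\]
The first summand is exactly the time derivative of $P$ along the original system \eqref{ode}, so by hypothesis it equals $C(\x)P$. The key step, and really the only substantive one, is the evaluation of $\nabla P\cdot\x=\sum_i x_i\,\partial P/\partial x_i$. Here I would invoke Euler's theorem on homogeneous functions: since $P$ is homogeneous of degree $d$, this sum equals $dP$. Combining the two pieces yields $\dot P=\bigl(C(\x)+d\,b(\x,t)\bigr)P$, which is precisely the claimed Darboux relation with cofactor $C+db$.

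I do not anticipate a genuine obstacle: the argument is two lines once Euler's identity is in hand, so the ``hard part'' is conceptual rather than computational. The one thing worth confirming is that the notion of cofactor remains legitimate when it carries $t$-dependence, but the definition in the introduction already allows $C$ to be an arbitrary function, so $C+db$ qualifies. I would also remark that homogeneity of $P$ is essential: without it $\nabla P\cdot\x$ need not be a scalar multiple of $P$, and the perturbation $b\x$ would in general fail to preserve the Darboux property.
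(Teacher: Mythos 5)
Your proposal is correct and follows essentially the same route as the paper's proof: split $\nabla P\cdot(\f+b\x)$ into the original Darboux term $CP$ and the perturbation term, then apply Euler's identity $\x\cdot\nabla P=dP$ for homogeneous $P$ to conclude $\dot P=(C+db)P$. Your added remarks on the harmlessness of the $t$-dependence of $b$ and the necessity of homogeneity are sound but go beyond what the paper states explicitly.
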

\begin{proof}
As $P$ is homogeneous of degree $d$, we have $\x\cdot\nabla P=dP$. As $P$ is a Darboux polynomial for $\dot{\x}=f(\x)$, we have $\dot{P}=\nabla P\cdot \f=CP$. For the generalised system we then have
\[
\dot{P}=\nabla P\cdot (\f+b\x)= CP + bdP = (C+db)P.
\]
\end{proof}

\vspace{5mm}

We first apply Theorem \ref{GODEs} to the 2-dimensional system (\ref{cex}). With $b=ax+cy$ we obtain a generalisation of (\ref{cex}),
\begin{equation} \label{gcex}
\begin{split}
\dot{x}&=x^2+2xy+3y^2+(ax+cy)x,\\
\dot{y}&=2y(2x+y)+(ax+cy)y.
\end{split}
\end{equation}
Each $P_i$, $i=1,2,3$, given by (\ref{DP123}), is a linear Darboux polynomial for the system (\ref{gcex}) with modified cofactor $C^\prime_i=C_i+ax+cy$, where $C_i$ is given by (\ref{C123}).
As
\[
(c -a - 2)C^\prime_1-(a + c + 6)C^\prime_2+2(a + 1)C^\prime_3=0,
\]
the function
\[
K=P_1^{c -a - 2}P_2^{-(a + c + 6)}P_3^{2(a + 1)}=\frac{(x+y)^{c -a - 2}y^{2(a+1)}}{(x-y)^{a + c + 6}}.
\]
is a first integral of (\ref{gcex}).

\vspace{5mm}

Applying Theorem \ref{GODEs} to the 4-dimensional system (\ref{OLV4}), taking $b$ to be a constant, yields
\begin{equation} \label{NLV4}
\begin{split}
\dot{x}_1&=x_1(b+x_2+x_3+x_4)\\
\dot{x}_2&=x_2(b-x_1+x_3+x_4)\\
\dot{x}_3&=x_3(b-x_1-x_2+x_4)\\
\dot{x}_4&=x_4(b-x_1-x_2-x_3),
\end{split}
\end{equation}
whose Darboux polynomials $P_{i,j}$ now have cofactors $C^\prime_{i,j}=C_{i,j}+b$.
In particular, $H=P_{1,4}$ is no longer an integral, and the linear combinations
$C^\prime_{1,2}-C^\prime_{3,3}+C^\prime_{4,4}=C^\prime_{3,4}-C^\prime_{2,2}+C^\prime_{1,1}=b$ do not vanish. We have to subtract the cofactor $C^\prime_{1,4}=b$, which corresponds to dividing by $H$. This yields two integrals
\[
F^\prime=\frac{(x_1+x_2)x_4}{(x_1+x_2+x_3+x_4)x_3},
\]
and
\[
G^\prime=\frac{(x_3+x_4)x_2}{(x_1+x_2+x_3+x_4)x_1}.
\]
The new system (\ref{NLV4}) is still Hamiltonian, with the same bracket (\ref{bra}). The new Hamiltonian
\[
H^\prime=H-b\ln\left(\frac{x_1x_3}{x_2x_4}\right)
\]
is no longer rational. The integrals $F^\prime,G^\prime,H^\prime$ are functionally independent, and so the system (\ref{NLV4}) is superintegrable. Moreover, the functions $F^\prime$ and $G^\prime$ Poisson commute with $H^\prime$, hence the systems $F^\prime,H^\prime$ and $G^\prime,H^\prime$ are Liouville integrable. In the next section we generalise this example to arbitrary even dimensions.

\section{Integrability of a generalised $n$-dimensional LV system}
In \cite{KKQTV} the system of ODEs
\begin{equation} \label{OLV}
\dot{x}_i=x_i\left(\sum_{j>i} x_j - \sum_{j<i} x_j\right), \qquad i=1,\ldots,n,
\end{equation}
arose as a subsystem of the quadratic vector fields associated with multi-sums of products, and it was shown to be superintegrable as well as Liouville integrable. Integrable generalisations of the system (\ref{OLV}) have been obtained in \cite{CHK,EKV,KQV}. The generalisation
\begin{equation} \label{NLV}
\dot{x}_i=x_i\left(b + \sum_{j>i} x_j - \sum_{j<i}x_j\right), \qquad i=1,\ldots,n,
\end{equation}
of which (\ref{NLV4}) is a special case, seems to be new. In \cite{KKQTV} the LV system of ODEs (\ref{OLV}), with $n=2r$ even, was shown to admit the integrals, for $k=1,\ldots, r$,
\begin{equation} \label{FG}
\begin{split}
F_k&=
(x_1+x_2+\cdots+x_{2k})\frac{x_{2k+2}x_{2k+4}\cdots x_{n}}{x_{2k+1}x_{2k+3}\cdots x_{n-1}}, \\
G_k&=
(x_{n-2k+1}+x_{n-2k+2}+\cdots+x_{n})\frac{x_{1}x_{3}\cdots x_{n-2k-1}}{x_{2}x_{4}\cdots x_{n-2k}},
\end{split}
\end{equation}
The $n-1$ integrals $F_1,\ldots, F_{r-1}, G_1,\ldots, G_{r-1}, F_r=G_r=H=P_{1,n}$ were proven to be independent, and the sets
$
\{F_1,\ldots, F_{r-1}, H\}$, $\{G_1,\ldots, G_{r-1}, H\}
$
were proven to pairwise Poisson commute with respect to the bracket (\ref{bra}), which has rank $n$. Similar results were obtained for $n$ odd (here the rank of (\ref{bra}) is $n-1$), establishing the superintegrability as well as Liouville integrability of the $n$-dimensional LV system (\ref{OLV}) for all $n$. We consider a generalisation of the even-dimensional system.

\begin{theorem} \label{FT}
The system
\begin{equation} \label{ST}
\dot{x}_i=x_i(b + \sum_{j>i} x_j - \sum_{j<i} x_j), \qquad i=1,\ldots,n,
\end{equation}
where $n=2r$ is even, is both superintegrable and Liouville integrable.
\end{theorem}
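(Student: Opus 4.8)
The plan is to view the system (\ref{ST}) as the deformation $\dot\x=\f(\x)+b\x$ of the undeformed Lotka--Volterra system (\ref{OLV}) with constant $b$, and to push the integrable structure of (\ref{OLV}) established in \cite{KKQTV} through Theorem \ref{GODEs}. Each linear factor $P_{i,j}$, and in particular each coordinate $x_i=P_{i,i}$, is a homogeneous Darboux polynomial of degree $1$; by Theorem \ref{GODEs} its cofactor for (\ref{ST}) is $C_{i,j}+b$. Hence for any product $\prod P_{i,j}^{a_{i,j}}$ the cofactor for (\ref{ST}) equals its cofactor for (\ref{OLV}) plus $b\sum a_{i,j}$. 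First I would record that each integral $F_k$ and $G_k$ in (\ref{FG}), as well as $H=P_{1,n}$, is such a product whose exponents sum to exactly $1$ (the leading factor contributes $+1$ while the numerator and denominator monomials cancel). Consequently each of $F_k,G_k,H$ is a Darboux polynomial of (\ref{ST}) with cofactor $b$ rather than $0$.

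Next I would build the $n-1$ integrals. Since $F_k,G_k,H$ all acquire the same cofactor $b$, the ratios
\[
F_k'=\frac{F_k}{H},\qquad G_k'=\frac{G_k}{H},\qquad k=1,\ldots,r-1,
\]
have cofactor $b-b=0$ and are therefore first integrals of (\ref{ST}), giving $n-2$ rational integrals. For the missing integral I would replace $H$ by $H'=H-b\ln\!\left(\frac{x_1x_3\cdots x_{n-1}}{x_2x_4\cdots x_n}\right)$. A direct check, using $\dot x_i/x_i=C_{i,i}+b$ for (\ref{ST}) together with the even-$n$ identities $\sum_{i=1}^n(-1)^{i+1}=0$ and $\sum_{i=1}^n(-1)^{i+1}C_{i,i}=\sum_ix_i=H$, shows the logarithmic term has time-derivative $H$, so $\dot H'=bH-b\cdot H=0$. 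Equivalently one verifies $\{x_i,H'\}=\dot x_i$, exhibiting (\ref{ST}) as the Hamiltonian system with Hamiltonian $H'$ for the bracket (\ref{bra}); this simultaneously re-proves that $H'$ is conserved.

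For Liouville integrability I would show that $\{F_1',\ldots,F_{r-1}',H'\}$ and $\{G_1',\ldots,G_{r-1}',H'\}$ are each in involution. Commutation with $H'$ is automatic: since (\ref{ST}) is the Hamiltonian flow of $H'$ and each $F_k'$ is a first integral, $\{F_k',H'\}=\dot F_k'=0$. For the remaining brackets I would use the elementary identity that $\{A,B\}=\{A,H\}=\{B,H\}=0$ implies $\{A/H,B/H\}=0$; applied with $A=F_k$, $B=F_l$, which pairwise commute and commute with $H$ for (\ref{OLV}) by \cite{KKQTV}, this gives $\{F_k',F_l'\}=0$, and likewise for the $G$'s. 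As the bracket (\ref{bra}) has rank $n=2r$, each set of $r$ independent commuting integrals yields Liouville integrability.

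The main obstacle is functional independence of the full set $F_1',\ldots,F_{r-1}',G_1',\ldots,G_{r-1}',H'$, since the rational integrals alone number only $n-2$ and one must certify that the non-rational $H'$ is independent of them. Here I would argue in the cotangent space. Writing $f_k=\ln F_k$, $g_k=\ln G_k$, $h=\ln H$, the independence of the undeformed integrals \cite{KKQTV} makes $V:=\mathrm{span}\{df_k,dg_k,dh\}$ an $(n-1)$-dimensional space; since these functions are constant along the vector field $X$ of (\ref{OLV}) and $\dim V=n-1$, in fact $V=\{\alpha:\alpha(X)=0\}$. The deformed logarithmic differentials $d\ln F_k'=df_k-dh$ and $d\ln G_k'=dg_k-dh$ are $n-2$ linearly independent elements of $V$, spanning a subspace $W\subset V$. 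It remains to place $dH'=H\,dh-b\,dL$ outside $W$, where $L=\ln(x_1x_3\cdots x_{n-1})-\ln(x_2x_4\cdots x_n)$. But $L$ is \emph{not} conserved by (\ref{OLV}): the identity $\sum_i(-1)^{i+1}C_{i,i}=H$ gives $dL(X)=H$, whence $dH'(X)=-bH\neq0$, so $dH'\notin V\supseteq W$ for $b\neq0$. Therefore the wedge of all $n-1$ deformed differentials is nonzero, establishing superintegrability. I expect the combinatorial identity $\sum_i(-1)^{i+1}C_{i,i}=H$ and the exponent-sum normalisation to be the only genuinely computational parts.
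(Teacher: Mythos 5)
Your proposal is correct, and most of it coincides with the paper's own proof: the construction of the integrals is identical (Theorem \ref{GODEs} gives $F_k$, $G_k$, $H$ cofactor $b$, the ratios $F_k'=F_k/H$, $G_k'=G_k/H$ together with $H'=H-bS$ form the set of $n-1$ integrals), and the Liouville part is argued exactly as in the paper (Leibniz reduction of $\{F_i/H,F_j/H\}$ to the brackets $\{F_i,F_j\}$, $\{F_i,H\}$, $\{H,F_j\}$, which vanish by \cite{KKQTV}, plus $\{F_k',H'\}=0$ because $F_k'$ is conserved by the Hamiltonian flow of $H'$). Where you genuinely diverge is on functional independence, which is the bulk of the paper's proof: the paper adjoins the auxiliary function $H$ to the set $\mathcal{S}$, takes linear combinations and orders the resulting $n$ functions, and evaluates the $n\times n$ Jacobian at $x_1=\cdots=x_n=b=1$, exhibiting an explicit LU-factorisation (cf. \cite{Dinh}) with non-vanishing diagonal entries. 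You instead argue dually, pairing differentials against the undeformed vector field $X$ of (\ref{OLV}): the log-differentials $d\ln F_k'$, $d\ln G_k'$ are differences of undeformed log-differentials, hence annihilate $X$ and are linearly independent because the undeformed integrals are independent by \cite{KKQTV}, while $dH'(X)=-bH\neq 0$ thanks to the identity $\sum_i(-1)^{i+1}C_{i,i}=H$ (which indeed holds for even $n$, and is the same identity that makes $H'$ conserved); this places $dH'$ outside the span of the other $n-2$ differentials. Your route is shorter, essentially computation-free, and works uniformly for every $b\neq 0$ (the case $b=0$ being the known result), but it is not self-contained: it imports the independence of the undeformed integrals from \cite{KKQTV}, whereas the paper's LU computation certifies independence of the deformed set from scratch. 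The two arguments share the same underlying trick — isolating $H'$ by means of an auxiliary object, the paper using the extra function $H$ inside a determinant, you using the extra vector $X$ in an annihilator/rank argument — but yours trades the explicit determinant evaluation for a conceptual one-line pairing, at the cost of a stronger dependence on the earlier literature.
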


\begin{proof}
The system is Hamiltonian with Hamiltonian
\[
H^\prime=H-bS, \text{ with } S=\ln\left(\frac{x_1x_3\cdots x_{n-1}}{x_2x_4\cdots x_n}\right).
\]

According to Theorem \ref{GODEs} the functions (\ref{FG}) and $H$ are Darboux functions (functions $F$ such that $\dot{F}=C(\x)F$ for some $C$) with cofactor $b$. Therefore, $n-2$ integrals are given by $F_i^\prime=F_i/H$ ,$G_i^\prime=G_i/H$, $i=1,\ldots,r-1$. Together with $H^\prime$ they form a set of
$n-1$ integrals,
\[
{\cal S}=\{F_1^\prime,\ldots, F_{r-1}^\prime, G_1^\prime,\ldots, G_{r-1}^\prime, H^\prime\},
\]
for which we will prove functional independence, thereby showing the superintegrability of (\ref{ST}). The trick is to add a function, $H$, and show that the bigger set ${\cal S} \cup \{H\}$ is functionally independent, by showing the determinant of the Jacobian to be non-zero, which is done using LU-decomposition, cf. \cite[Chapter 5]{Dinh}.
We may perform row operations, which we do by taking linear combinations of the functions $H$ and $H^\prime$ and ordering the functions in a particular way:
\[
Z=\big(2(H-H^\prime/2)/n^2,H/n^2,G^\prime_{n/2-1},F^\prime_1,
G^\prime_{n/2-2},F^\prime_2,\ldots,G^\prime_1,F^\prime_{n/2-1}\big).
\]
We then consider the scaled Jacobian $J=n^2\text{Jac}(Z)/2$ in the point $x_1=x_2=\cdots=x_n=b=1$. The first two functions in $Z$ are chosen so the first two rows in $J$ are given by $J_{i,j}=i+j+1 \mod 2$ $(i=1,2)$.

We conveniently introduce two sets of elementary functions
\[
P_{i,j}=x_i+x_{i+1}+\cdots+x_{j},\qquad
Q_{i,j}=x_i^{-1}x_{i+1}x_{i+2}^{-1}\cdots x_{j}^{(-1)^{j-i+1}},
\]
so that e.g. $F^\prime_k=\frac{P_{1,2k}Q_{2k+1,n}}{P_{1,n}}$.
As
\[
\frac{\partial F^\prime_k}{\partial x_i}=\begin{cases}
\frac{Q_{2k+1,n}}{P_{1,n}} - \frac{P_{1,2k}Q_{2k+1,n}}{P_{1,n}^2}  & i \leq 2k \\
-(-1)^i\frac{P_{1,2k}Q_{2k+1,n}}{x_iP_{1,n}} - \frac{P_{1,2k}Q_{2k+1,n}}{P_{1,n}^2}  & i > 2k,
\end{cases}
\]
we have
\[
\frac{n^2}{2}\frac{\partial F^\prime_k}{\partial x_i}\mid_{\bf x=\bf 1}=\begin{cases}
\frac{n}{2} - k  & i \leq 2k \\
-(-1)^i kn - k  & i > 2k.
\end{cases}
\]
In the point ${\bf 1}$ the gradient of $G_k$ is the gradient of $F_k$ read from right to left.
This yields, for $i>2$
\[
J_{i,j}=\begin{cases}
-((-1)^jn+1)(n-i+1)/2 & i\equiv 1,\ j<i \\
(i-1)/2 & i\equiv 1, j\geq i \\
(n-i+2)/2 & i\equiv 0, j< i-1 \\
((-1)^jn-1)(i-2)/2 & i\equiv 0,\ j\geq i-1,
\end{cases}
\]
where (here and in the sequel) the equivalence is taken modulo 2. Explicitly, for $n=10$ we have
\[
J={\small \begin{pmatrix}
1&0&1&0&1&0&1&0&1&0\\
0&1&0&1&0&1&0&1&0&1\\
36&-44&1&1&1&1&1&1&1&1\\
4&4&-11&9&-11&9&-11&9&-11&9\\
27&-33&27&-33&2&2&2&2&2&2\\
3&3&3&3&-22&18&-22&18&-22&18\\
18&-22&18&-22&18&-22&3&3&3&3\\
2&2&2&2&2&2&-33&27&-33&27\\
9&-11&9&-11&9&-11&9&-11&4&4\\
1&1&1&1&1&1&1&1&-44&36
\end{pmatrix}}.
\]

We define lower and upper triangular matrices
\[
L_{i,k}=\begin{cases}
M_{i,k} & k=1,2\\
1 & k=i\\
k/(n-k) & 1\equiv i = k+1, k>2 \\
0 & \text{otherwise},
\end{cases}
\quad
U_{k,j}=\begin{cases}
M_{k,j} & k=1,2\\
-n(n-k)/2 & k \equiv 1, j \equiv 1, j\geq k\\
n(n-k+2)/2 & k \equiv 1, j \equiv 0, j\geq k\\
-n^2/(n-k+1) & k \equiv 0, j \equiv 0, j\geq k\\
0 & k \equiv 0, j \equiv 0, j\geq k \text{ or } k>j.
\end{cases}
\]

When $n=10$ we have
\[
L={\small
\begin{pmatrix}
1&0&0&0&0&0&0&0&0&0\\
0&1&0&0&0&0&0&0&0&0\\
36&-44&1&0&0&0&0&0&0&0\\
4&4&\frac37&1&0&0&0&0&0&0\\
27&-33&0&0&1&0&0&0&0&0\\
3&3&0&0&1&1&0&0&0&0\\
18&-22&0&0&0&0&1&0&0&0\\
2&2&0&0&0&0&\frac73&1&0&0\\
9&-11&0&0&0&0&0&0&1&0\\
1&1&0&0&0&0&0&0&9&1
\end{pmatrix}},\
U={\small
\begin{pmatrix}
1&0&1&0&1&0&1&0&1&0\\
0&1&0&1&0&1&0&1&0&1\\
0&0&-35&45&-35&45&-35&45&-35&45\\
0&0&0&-{\frac{100}{7}}&0&-{\frac{100}{7}}&0&-{\frac{100}{7}}&0&-{\frac{100}{7}}\\
0&0&0&0&-25&35&-25&35&-25&35\\
0&0&0&0&0&-20&0&-20&0&-20\\
0&0&0&0&0&0&-15&25&-15&25\\
0&0&0&0&0&0&0&-{\frac{100}{3}}&0&-{\frac{100}{3}}\\
0&0&0&0&0&0&0&0&-5&15\\
0&0&0&0&0&0&0&0&0&-100
\end{pmatrix}}.
\]

We now show that $J=LU$, making use of the Kronecker delta, $\delta_{i,k}=1$ if $i=k$ and $0$ otherwise, and using summation over repeated indices. There are three cases:
\begin{itemize}
\item $i=1,2.$ We have $L_{i,k}=\delta_{i,k}$, so $L_{i,k}U_{k,j}=U_{i,j}=M_{i,j}$.
\item $1\equiv i>2$. We have
$\begin{aligned}[t]
L_{i,k}U_{k,j}&=(n-1)(n-i+1)U_{1,j}/2-(n+1)(n-i+1)U_{2,j}/2+U_{i,j}\\
&=\begin{cases}
-((-1)^jn+1)(n-i+1)/2 & i>j \\
(n-1)(n-i+1)/2-n(n-i)/2=(i-1)/2 & 1 \equiv j \geq i \\
-(n+1)(n-i+1)/2+n(n-i+2)/2=(i-1)/2 & 0 \equiv j \geq i.
\end{cases}
\end{aligned}$
\item $0\equiv i>2$. We have
$\begin{aligned}[t]
L_{i,k}U_{k,j}&=(n-i+2)(U_{1,j}+U_{2,j})/2+(i-1)U_{i-1,j}/(n-i+1)+U_{i,j}\\
&=\begin{cases}
\frac{n-i+2}{2} & j<i-1 \\
\frac{n-i+2}{2}-\frac{(i-1)n(n-i+1)}{2(n-i+1)}=-\frac{(i-2)(n+1)}2 & 1 \equiv j \geq i \\
\frac{n-i+2}{2}+\frac{(i-1)n(n-i+3)}{2(n-i+1)}-\frac{n^2}{n-i+1}=\frac{(i-2)(n-1)}2 & 0 \equiv j \geq i.
\end{cases}
\end{aligned}$
\end{itemize}
As both $L$ and $U$ have non-zero diagonal elements, the determinant of $J$ is non-zero. Hence the set $S$ is functionally independent. This shows that (\ref{ST}) is superintegrable.

Next we prove that each pair of functions in the set
$
\{F_1^\prime,\ldots, F_{r-1}^\prime, H^\prime\}$
Poisson commutes with respect to the bracket (\ref{bra}). Due to the Leibniz rule, the brackets $\{F_i^\prime,F_j^\prime\}=\{F_i/H,F_j/H\}$, with $1\leq i,j < r$, can be expressed in terms of $\{F_i,F_j\}$, $\{F_i,H\}$, $\{H,F_j\}$, which all vanish. We also have $\{F_i^\prime,H^\prime\}=0$ as the $F_i^\prime$ are integrals and $H^\prime$ is the Hamiltonian function of the system. Similarly, it follows that the functions in $\{G_1^\prime,\ldots, G_{r-1}^\prime, H^\prime\}$ Poisson commute. This shows that (\ref{ST}) is Liouville integrable.
\end{proof}

{\bf Remark 1.} Similar to the above, one can also show that the system
\begin{equation} \label{PST}
\dot{x}_i=x_i\left(b(S) + \sum_{j>i} x_j - \sum_{j<i} x_j\right), \qquad i=1,\ldots,n=2r,
\end{equation}
where $b$ is an arbitrary integrable function, is both superintegrable and Liouville integrable. The system (\ref{PST}) is a Hamiltonian system with Hamiltonian $H^\ast=H-B(S)$, where $B$ is the anti-derivative of $b$.\\

{\bf Remark 2.} In general, the $b$-generalisation (\ref{gode}) of a Hamiltonian system (\ref{ode}) will not be Hamiltonian. We hope to discuss some other cases in which the generalisation is Hamiltonian in a future publication. The reason that we have restricted the dimension of the Lotka-Volterra systems (\ref{ST}) to be even is that it seems unclear whether
a Hamiltonian exists in the general odd-dimensional case.

\section{Acknowledgements}
The authors thank Pambos Evripidou for careful reading of the manuscript and his suggestions to include a generalisation of (\ref{cex}) as well as remark 1. GRWQ is grateful to the Simons Foundation for a Fellowship during the early stages of this work, and acknowledges support from the European Union Horizon 2020 research and innovation programmes under the Marie Sk\l{}odowska-Curie grant agreement No. 691070.


\begin{thebibliography}{10}
\bibitem{B}
O.I. Bogoyavlenskij, Integrable Lotka-Volterra systems, Regul. Chaotic Dyn. {\bf 13} (2008) 543--556.

\bibitem{CEMOQTV}
E. Celledoni, C. Evripidou, D.I. McLaren, B. Owren, G.R.W. Quispel, B.K. Tapley and P.H. van der Kamp, Using discrete Darboux polynomials to
detect and determine preserved measures
and integrals of rational maps, J. Phys. A: Math. Theor. {\bf 52} (2019) 31LT01 (11pp).

\bibitem{CHK}
H. Christodoulidi, A.N.W. Hone, T.E. Kouloukas. A new class of integrable Lotka-Volterra systems. J. Comput. Dyn. {\bf 6} (2) (2019) 223--237.

\bibitem{C}
C.B. Collins, Algebraic conditions for a centre or a focus in some simple systems of arbitrary degree,
J. Math. Anal. Appl. {\bf 195} (1995) 719--735.

\bibitem{EKV}
C. Evripidou, P. Kassotakis, P. Vanhaecke, Integrable reductions of the dressing chain, J. Comput. Dyn. {\bf 6} (2) (2019) 277--306.

\bibitem{G}
A. Goriely, Integrability and Nonintegrability of Dynamical Systems, World Scientific, 2001 (436pp).

\bibitem{KQV}
T.E. Kouloukas, G.R.W. Quispel and P. Vanhaecke, Liouville integrability and superintegrability of a generalized Lotka-Volterra system and its Kahan discretization, J. Phys. A {\bf 49} (2016) 13pp.

\bibitem{Dinh}
D.T. Tran, Complete integrability of maps obtained as reductions of integrable lattice equations, PhD thesis, La Trobe University, Australia, 2011.

\bibitem{KKQTV}
P.H. van der Kamp, T.E. Kouloukas, G.R.W. Quispel, D.T. Tran and P. Vanhaecke, Integrable and superintegrable systems associated with multi-sums of products, Proc. R. Soc. Lond. Ser. A Math. Phys. Eng. Sci. {\bf 470} (2014) 20140481.



\end{thebibliography}
\end{document}